\newcommand{\bbN}{{\mathbb{N}}}
\newcommand{\bbR}{{\mathbb{R}}}
\newcommand{\bbZ}{{\mathbb{Z}}}
\newcommand{\lb}{\label}
\newcommand{\supp}{\text{\rm{supp}}}
\newcommand{\bi}{\bibitem}
\newcommand{\beq}{\begin{equation}}
\newcommand{\eeq}{\end{equation}}
\newcommand{\ba}{\begin{align}}
\newcommand{\ea}{\end{align}}
\newcounter{smalllist}
\newcommand{\comm}[1]{}
\numberwithin{equation}{section}
\newtheorem{theorem}{Theorem}[section]
\newtheorem{proposition}[theorem]{Proposition}
\newtheorem{lemma}[theorem]{Lemma}
\newtheorem{corollary}[theorem]{Corollary}
\theoremstyle{definition}
\newtheorem{example}[theorem]{Example}
\newtheorem*{remark}{Remark}
\newtheorem*{remarks}{Remarks}
\newcommand{\jap}[1]{\langle #1 \rangle}
\newcommand{\Norm}[1]{\lVert#1\rVert}
\begin{document}

\title[Three Coauthors]{A Tale of Three Coauthors}
\author[B.~Simon]{Barry Simon$^{1,2}$}

\thanks{$^1$ Departments of Mathematics and Physics, Mathematics 253-37, California Institute of Technology, Pasadena, CA 91125.
E-mail: bsimon@caltech.edu}

\thanks{$^2$ Research supported in part by Israeli BSF Grant No. 2020027.}

\

\date{\today}
\keywords{Ising model, Majorization}
\subjclass[2020]{82B20, 62H99}

\begin{abstract}  We tell the story of the discovery of an interesting bound on finite sums and its application to comparison of Ising models.
\end{abstract}

\maketitle

%%%%%%%%%%%%%%%%%%%%%%%%%%%%%%%%%%%%%%%%%%%%%%%%%%%%%%%%%%%%%%
\section{Introduction} \lb{s1}
%%%%%%%%%%%%%%%%%%%%%%%%%%%%%%%%%%%%%%%%%%%%%%%%%%%%%%%%%%%%%%

In 2022, we celebrated Elliott Lieb's $90^{th}$ birthday.  On Friday, Jan 14, 2022, I had a draft of a single authored paper intended for a Festschrift to be published for Lieb.  Six days later, that paper had three coauthors who I hadn't met before Jan 14, 2022 (indeed, even now, I've only met them on Zoom).  This paper will explain the interesting story, expose some underlying machinery and sketch the proof of a lovely inequality on certain finite sums.  It will include an improvement of $50$ year old bounds of Griffiths~\cite{GriffTrick} comparing transition temperatures on generalized Ising models for different spins.  Because its a fun story and involves a charming inequality, I've given several talks on the material including at the Seminar on Analysis, Differential Equations and Mathematical Physics~\cite{SimonTalk}, an online seminar sponsored by the Institute of Mathematics, Mechanics and Computer Sciences of Southern Federal University, Rostov-on-Don, Russia.  After the talk, I was invited to by one of the organizers, Alexey Karapetyants, to contribute an article telling this story for the special $50^{th}$ anniversary issue of the Journal of Mathematical Sciences.  I was originally reluctant but persuaded by the fact that I might convince young workers to react properly to rejections since a paper rejection is part of the story.

I am writing a book for Cambridge Press entitled \emph{Phase Transitions in the Theory of Lattice Gases}~\cite{PTLG}.  It is, in many ways, the successor to my 1993 book \emph{The Statistical Mechanics of Lattice Gases}, Vol. I~\cite{SMLG}, from Princeton University Press.  That earlier book was mainly framework and largely left out all the most fun and beautiful elements of the theory: Correlation Inequalities, Lee-Yang, Peierls' Argument, BKT transitions, Infrared Bounds and Random Clusters and Currents  which are the subjects of the new book.  But since I decided to use a different publisher, this is certainly NOT volume 2 of the earlier work.  The main mathematical focus of the story is an inequality which one can state and admire without knowing anything about Ising models so I will not bother to define what exactly what they are instead referring the reader in general to the two books or to the paper with the three coauthors~\cite{MSW}, giving more precise references at appropriate points.

For our discussion we will only need to consider generalized Ising models in finite volume, a subset, $\Lambda$, of the lattice $\bbZ^\nu$.  For each site $j \in\Lambda$, one has a real valued ``spin'', $\sigma_j$.  For $A\subset\Lambda$, one defines
\begin{equation}\label{1.1}
  \sigma^A = \prod_{j\in A}\sigma^j
\end{equation}
One object needed to define the Gibbs measure is a function, $J:2^\Lambda\to\bbR$ called a coupling which is called \emph{ferromagnetic} if $J(A)\ge 0$ for all $A$.  One then forms a Hamiltonian,
\begin{equation}\label{1.2}
  h=-\sum_{A\subset\Lambda} J(A)\sigma^A
\end{equation}
The name ferromagnetic comes from the fact that when $J(A)\ge 0$, states with more spins parallel have lower energies and so according to Gibbs rules higher weights.

The other object one needs describes the uncoupled state which gives the $\{\sigma_j\}_{j\in\Lambda}$ the distribution of independent, identically distributed random variables with some common distribution, $d\mu$, a probability measure on $\bbR$ called the \emph{apriori measure}, which we will always suppose to be even.  One then fixes $J(\cdot)$ but varies $\mu$ and forms Gibbs states, $\jap{\cdot}_{\mu,\Lambda}$, according to the standard prescription~\cite[Chap. III]{SMLG}.

There are special choices of apriori measure that particularly concern us beginning with the spin-$\tfrac{1}{2}$ measure:
\begin{equation}\label{1.3}
  d\tilde{\mu}_{S=\tfrac{1}{2}} = \frac{1}{2}(\delta_{1}+\delta_{-1})
\end{equation}
More generally, for any $T>0$, we consider
\begin{equation}\label{1.4}
  db_T = \frac{1}{2}(\delta_{T}+\delta_{-T})
\end{equation}
$b$ is for Bernoulli.  Finally, for $S=\tfrac{1}{2}, 1, \tfrac{3}{2}, 2,\dots$, we consider the measure $d\tilde{\mu}_S$ which takes $2S+1$ values equally spaced between $-1$ and $1$, each with weight $1/(2S+1)$.  This is a scaled version of what is called spin $S$ Ising (which has maximum spin value $2S$).

\emph{Conflict of Interest Statement}: This is to certify that the author has no conflicts of interest.

\emph{Data Availability Statement}: There is no data associated with this paper.

%%%%%%%%%%%%%%%%%%%%%%%%%%%%%%%%%%%%%%%%%%%%%%%%%%%%%%%%%%%%%%
\section{Wells Ordering} \lb{s2}
%%%%%%%%%%%%%%%%%%%%%%%%%%%%%%%%%%%%%%%%%%%%%%%%%%%%%%%%%%%%%%

As I began to write about correlation inequalities, I wondered about a natural question.  Say that an apriori measure, $\nu$, on $\bbR$ \emph{Ising dominates} another measure $\mu$ if and only if for all $J(A)\ge 0$ and all $B$, one has that

\begin{equation}\label{2.1}
  \jap{\sigma^B}_{\mu,\Lambda} \le \jap{\sigma^B}_{\nu,\Lambda}
\end{equation}

In particular, for general $\mu$ compact support, does one have that $\mu$ Ising dominates $b_{T_-}$ and is Ising dominated by $b_{T_+}$ for suitable $0< T_- < T_+ <\infty$.  That would imply phase transitions occur for one apriori measure if and only if they do for all and inequalities on transition temperatures. To be explicit, if $\mu$ Ising dominates $b_{T_-(\mu)}$, and if $T_c(\mu)$ is a transition temperature for some fixed ferromagnetic pair interaction, one easily sees that

\begin{equation}\label{2.2}
  T_c(\mu) \ge T_-(\mu)^2 T_c(\text{\textit{classical Ising}})
\end{equation}

For most even minor aspects of the subject of correlation inequalities, there are several papers, sometimes as many as a dozen.  So I was surprised that I was unable to find a single published paper on the subject of what I just called Ising domination! Of course, it was unclear how to search for the subject in Google.  Eventually, I did find one paper of van Beijeren and Sylvester~\cite{vBS} that is unsatisfactory in that, in their theory, the analog of what I call $T_-$ is $0$ if $0\in\supp(\mu)$.  And I did find an appendix of a paper on another subject but that gets ahead of my story.

One of the pleasant things about writing a book on a subject that I once knew more about is that I get to rediscover things I've forgotten.  With the question of Ising domination in the back of my mind, I found an interesting footnote in a 1980 paper of Aizenman and er, B. Simon~\cite{AiSiRotor} entitled \emph{A comparison of plane rotor and Ising models}.  The footnote said

\medskip

\noindent\emph{then by results of Wells (D. Wells, \emph{Some moment inequalities for general spin Ising ferromagnets}, Indiana Univ. preprint)}
\begin{equation}\label{2.3}
   \jap{s_js_k}_{\beta,1} \le 2\jap{\sigma_j^{(1)}\sigma_k^{(1)}}_{\beta,2}
\end{equation}

\medskip

Here the left hand side is an Ising expectation and the right with the apriori measure of the $2D$ rotor with only couplings of the $1$ components.  So this was part of what seems to be an Ising domination result (the $2$ indicates the Ising measure should really be $b_{1/\sqrt{2}}$).

So I set about finding this preprint.  Google didn't help directly but did point me to a 1984 paper of Chuck Newman~\cite{NewWells} that mentioned Wells' Indiana University PhD. thesis.  So I wrote to Michael asking if he knew anything about our footnote and cced Chuck (who had been a grad student with me at Princeton) because I conjectured Wells had been his student.  Chuck replied and said he remembered that Wells had been Slim Sherman's student.  Sherman, the S of GKS and GHS was delightful character, long dead.  I then wrote to Kevin Pilgrim, the chair at Indiana, who located a copy of Wells thesis~\cite{WellsTh} for me on Proquest.  But he had no luck on the preprint nor on locating Wells through Indiana University alumni records (to get way ahead of the story, when I eventually reached Wells, he was surprised by this remarking ``gee, they don't have trouble finding me to send fund raising letters'')!  While the thesis did not have anything directly about the above inequality, it did have a general framework on what I called the Ising domination problem, lovely material that should have been published.

Given the beauty and relevance of the work in Wells thesis, I decided to include it in my book and to further develop it.   Wells exploited tools that Ginibre had introduced~\cite{GinibreGKS} to prove GKS correlation inequalities which he applied instead to what I've called the Ising domination problem.  I won't describe his work in detail here (but see \cite{MSW, PTLG}) but will limit things to quoting the part of his major theorem relevant to our focus in this note.

\begin{theorem} [Wells\cite{WellsTh, MSW}] Let $d\mu$ be an even probability measure on $\bbR$ with compact support that is not a point mass at $0$.  Then there is a strictly positive number, $T_-(\mu)$, so that $\mu$ Ising doninates $b_S$ if and only if $S\le T_-$.  Moreover
\begin{equation}\label{2.4}
   S\le T_-\iff \forall_{n\in\bbN} \int_\bbR (x^2-S^2)^n\,d\mu(x) \ge 0
\end{equation}
\end{theorem}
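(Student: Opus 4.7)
The plan is to follow Ginibre's duplicate-variable device: introduce two independent copies of the spin system, one with apriori $\mu$ (variables $\sigma_j$) and one with apriori $b_S$ (variables $\tau_j$), and pass to the variables $P_j = \tfrac12(\sigma_j+\tau_j)$, $Q_j = \tfrac12(\sigma_j-\tau_j)$. Writing
\[
\jap{\sigma^B}_{\mu} - \jap{\sigma^B}_{b_S} = \frac{1}{Z_\mu Z_{b_S}}\int(\sigma^B - \tau^B)\,e^{\sum_A J(A)(\sigma^A+\tau^A)}\,d\mu(\sigma)\,db_S(\tau),
\]
expanding the exponential as a power series in the (non-negative) $J(A)$'s, and using
\[
\sigma^A + \tau^A = 2\!\!\sum_{|C|\text{ even}}\!\!P^{A\setminus C}Q^C, \qquad \sigma^B - \tau^B = 2\!\!\sum_{|D|\text{ odd}}\!\!P^{B\setminus D}Q^D,
\]
one expresses the integrand as a non-negative combination of monomials $\prod_j P_j^{a_j}Q_j^{b_j}$. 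Since $d\mu$ and $db_S$ are product measures over sites, each such integral factors into single-site pieces $\mathbb{E}[P^{a_j}Q^{b_j}]$, and (following Wells) one concludes that $\mu$ Ising-dominates $b_S$ if and only if $\mathbb{E}[P^a Q^b] \ge 0$ for every $a,b \ge 0$; the evenness of $\mu$ and $b_S$ forces $\mathbb{E}[P^a Q^b]=0$ whenever $a+b$ is odd.

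Next comes the key algebraic simplification: since $\tau \in \{\pm S\}$ gives $\tau^2 = S^2$, one has, for $a \ge b$,
\[
(\sigma+\tau)^a(\sigma-\tau)^b = (\sigma+\tau)^{a-b}(\sigma^2 - S^2)^b,
\]
and averaging over $\tau = \pm S$ yields
\[
\mathbb{E}[P^a Q^b] = 2^{-(a+b+1)}\int\bigl[(\sigma+S)^{a-b}+(\sigma-S)^{a-b}\bigr](\sigma^2 - S^2)^b\,d\mu(\sigma).
\]
For $a+b$ even (hence $a-b$ even), the bracket is a polynomial in $\sigma^2, S^2$ with non-negative coefficients; expanding $\sigma^{2j} = \sum_k \binom{j}{k}(\sigma^2-S^2)^k S^{2(j-k)}$ then rewrites $\mathbb{E}[P^a Q^b]$ as a non-negative linear combination of $\int (x^2-S^2)^\ell d\mu$ with $\ell \ge b$. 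Thus the moment condition in \eqref{2.4} implies $\mathbb{E}[P^a Q^b] \ge 0$ for all admissible $a,b$, giving one direction. Conversely, taking $a = b = n$ collapses everything to $\mathbb{E}[P^n Q^n] = 2^{-2n}\int (x^2-S^2)^n d\mu$, so Ising domination forces the moment condition.

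Finally, define $\mathcal{S} = \{S \ge 0 : \int (x^2-S^2)^n d\mu \ge 0 \text{ for all } n\}$. It is closed by dominated convergence (compact support), and the identity
\[
(x^2 - (S')^2)^n = \sum_{k=0}^n \binom{n}{k}(S^2-(S')^2)^{n-k}(x^2-S^2)^k \qquad (S' < S)
\]
shows $\mathcal{S}$ is downward closed, hence $\mathcal{S} = [0,T_-]$ for some $T_- \ge 0$. To verify $T_- > 0$, choose $r > 0$ with $\mu(\{|x|\ge r\}) > 0$ (possible since $\mu \ne \delta_0$); for $S < r/\sqrt 2$ the bound
\[
\int (x^2-S^2)^n d\mu \ge (r^2-S^2)^n\mu(\{|x|\ge r\}) - S^{2n}\mu(\{|x|<S\})
\]
is positive once $n \ge N_0(S)$, while continuity of $S \mapsto \int (x^2-S^2)^n d\mu$ at $0$ (with positive value $m_{2n}>0$) handles the finitely many remaining $n$ after shrinking $S$ further.

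The main obstacle is the first step, where one has to set up Ginibre's expansion with the right parity bookkeeping and, for the necessity direction, carefully isolate individual single-site quantities $\mathbb{E}[P^a Q^b]$ via suitable choices of ferromagnetic $J$ and of $B$. The algebraic reduction in step two and the compactness-and-continuity argument in step three are comparatively routine once the framework of step one is in place.
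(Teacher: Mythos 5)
First, a point of comparison: the paper does not actually prove this theorem --- it quotes Wells' result and refers to \cite{WellsTh, MSW, PTLG} for the details, noting only that ``Wells exploited tools that Ginibre had introduced.'' So your proposal cannot be checked against an in-paper argument; it can only be judged on its own terms and against the route the paper says Wells took. On that score your route is the intended one: the Ginibre duplicate-variable expansion in $P_j=\tfrac12(\sigma_j+\tau_j)$, $Q_j=\tfrac12(\sigma_j-\tau_j)$ is exactly Wells' mechanism. Your sufficiency half is sound: the parity identities for $\sigma^A+\tau^A$ and $\sigma^B-\tau^B$ are correct, the expansion of the exponential and the factorization over sites are legitimate (everything is a finite sum of uniformly convergent series on a compact set), the reduction of $\bbE[P^aQ^b]$ to a non-negative combination of $\int(x^2-S^2)^\ell\,d\mu$ via $\tau^2=S^2$ is correct (you should note the case $a<b$ separately, but it is symmetric), and the argument that $\{S:\text{(2.4) holds}\}$ is a closed, bounded, downward-closed interval $[0,T_-]$ with $T_->0$ is complete --- indeed, for $S$ small enough that $(r^2-S^2)/S^2>\max(1,1/\mu(\{|x|\ge r\}))$ your single estimate already handles \emph{all} $n$ at once, so the ``finitely many remaining $n$'' step can be dispensed with.

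The genuine gap is the necessity direction, and it is not a technicality you can wave at. Your chain of implications needs ``Ising domination $\Rightarrow$ $\bbE[P^aQ^b]\ge 0$ for all $a,b$'' (or at least for $a=b=n$), and the Ginibre expansion by itself only shows that these single-site quantities are \emph{sufficient} for domination: a sum of products can be non-negative even if some factors are negative. Without the converse you only obtain ``domination holds for all $S\le T_-$,'' not the ``only if,'' and hence not the equivalence \eqref{2.4}. Closing this requires an explicit construction --- a choice of finite $\Lambda$, ferromagnetic $J$, and observable $B$ for which the leading term of $\jap{\sigma^B}_{\mu,\Lambda}-\jap{\sigma^B}_{b_S,\Lambda}$ in a small-coupling expansion is a positive multiple of $\bbE[P^nQ^n]=2^{-2n}\int(x^2-S^2)^n\,d\mu$ times strictly positive factors, so that a negative odd moment forces a violated correlation inequality. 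This is the actual content of Wells' theorem (it is what distinguishes his result from the one-sided existence statement reproduced in the Bricmont--Lebowitz--Pfister appendix), and you have labeled it ``the main obstacle'' rather than overcome it. Everything else in your write-up is correct and matches the known proof; this one step is where the work lives.
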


I should mention I happened to look at a 1981 paper of Bricmont, Lebowitz and Pfister~\cite{BLP} that includes in an appendix a proof (with attribution to Wells) of Wells result about the existence of $T_->0$.

One consequence of the theorem is
  \begin{equation}\label{2.5}
  T_- \le \left(\int_\bbR x^2\,d\mu(x)\right)^{1/2}
  \end{equation}
It is an interesting question when one has equality in this inequality.  I call the measure \emph{canonical} if one does have it.  One of my few new results was to show that if $\mu_D$ is the probability of distribution of a the first component of a $D$-component vector uniformly distributed on the unit sphere $S^{D-1}$ in $\bbR^D$, then $\mu_D$ is canonical (it was clear from the quote from my paper with Aizenman that Wells had proven this for $D=2$ in his preprint).

I also computed for $0\le\lambda\le 1$, $T_-$ for the probability measure supported by the three points $\{0,\pm1\}$ given by
\begin{equation}\label{2.6}
  d\mu_\lambda = \tfrac{\lambda}{2}\left(\delta_1+\delta_{-1}\right)+(1-\lambda)\delta_0
\end{equation}
and found that
\begin{equation}\label{2.7}
  T_-(\lambda) = \left\{
                   \begin{array}{ll}
                     \sqrt{\lambda}, & \hbox{ if }\lambda\le\tfrac{1}{2} \\
                     \sqrt{\tfrac{1}{2}}, & \hbox{ if }\lambda\ge\tfrac{1}{2}
                   \end{array}
                 \right.
\end{equation}
In particular , this measure is canonical if and only if $\lambda\le\tfrac{1}{2}$.  This shows some measures are canonical and others are not.

%%%%%%%%%%%%%%%%%%%%%%%%%%%%%%%%%%%%%%%%%%%%%%%%%%%%%%%%%%%%%%%
\section{The Conjecture} \lb{s3}
%%%%%%%%%%%%%%%%%%%%%%%%%%%%%%%%%%%%%%%%%%%%%%%%%%%%%%%%%%%%%%

Consider the measure $d\tilde{\mu}_{S}$ discussed earlier - the scaled spin $S$ Ising model of $2S+1$ values equally spaced between $-1$ and $1$. We have just seen that for $S=1$ ($\lambda=\tfrac{2}{3}$ in the above example), one has that $T_-=\sqrt{\tfrac{1}{2}}<\sqrt{\tfrac{2}{3}}= \left(\int_\bbR x^2\,d\tilde{\mu}_{S=1}(x)\right)^{1/2}$. So $T_-\ne\left({\jap{x^2}_\mu}\right)^{1/2}$ for spin $1$ so that measure is \emph{not} canonical!

But I quickly determined that one should expect equality in all other cases.  I did spin $\tfrac{3}{2}$ by hand and used Mathematica to compute $\jap{(x^2-a_S)^{2n+1}}_S$ where $a_S=\left(\int_\bbR x^2\,d\tilde{\mu}_{S}(x)\right)$ for $S=2,\tfrac{5}{2},3$ and $m=1,2,\dots,10$ and for $S=20$ and $m=1,\dots,5$ and found them all positive which leads to a natural conjecture
\begin{equation}\label{3.1}
  \jap{(x^2-a_S)^{2n+1}}_S\ge 0
\end{equation}

As explained earlier, because Wells domination implies Ising domination, one has that for pair interactions
\begin{equation}\label{3.2}
  T_c(S) \ge  T_-(S)^2  T_c\left(\tfrac{1}{2}\right)
\end{equation}
As it turns out, there is a result of this genre in the literature.   In 1969, Griffiths wrote a famous paper~\cite{GriffTrick} on obtaining spin $S$ Ising spins by ferromagnetically coupling $2S$ spin $\tfrac{1}{2}$ spins together which lead to GKS and Lee Yang for spin $S$ Ising systems. This is usually presented in terms of an elegant coupling discussed in the first part of the paper. Less attention is paid to the second part where he shows instead there is such a coupling in which $S$ of the spin $\tfrac{1}{2}$ spins are frozen together (for $S$ half an odd integer, it's $S+\tfrac{1}{2}$) which he noted implies
\begin{equation}\label{3.3}
   T_c(S) \ge \tfrac{1}{4}  T_c\left(\tfrac{1}{2}\right)
\end{equation}

The quantity $a_S=\left(\int_\bbR x^2\,d\tilde{\mu}_{S}(x)\right)$ of \eqref{3.1} is equal to $\tfrac{1}{3}+\tfrac{1}{3S}$. If one proves that this is $T_-^2$ for $S\ne 1$, one has for such $S$ that
\begin{equation}\label{3.4}
  T_c(S) \ge \left(\frac{1}{3}+\frac{1}{3S}\right)  T_c\left(\tfrac{1}{2}\right)
\end{equation}
while for $S=1$ where we know that one has that $T_-^2=\tfrac{1}{2}$
\begin{equation}\label{3.5}
  T_c(1) \ge \frac{1}{2} T_c\left(\tfrac{1}{2}\right)
\end{equation}
Not only is this an improvement of Griffiths by more than $\tfrac{4}{3}$ but in the result for $S\ne 1$, the improved constant is optimal!! For one has equality if $T_c$ is replaced by its mean field values and, as noted by Dyson, Lieb and Simon~\cite{DLS2}, mean field theory is exact in the nearest neighbor infinite dimension limit.

Rescaling so the maximum spin value is $S$, the conjecture, \eqref{3.1}, is the assertion that for $m=1,2,\dots$ and $S =  \tfrac{3}{2}, 2, \tfrac{5}{2}, 3, \dots$, one has that
\begin{equation}\label{3.6}
  \sum_{j=-S}^{S}(3j^2-S(S+1))^{2m+1}\ge 0
\end{equation}
For $S$ an integer, this is the usual kind of sum.  For $2S$ an odd integer, $j$ takes the $2S+1$ values $-S,-S+1,\dots,S-1,S$, i.e. $2j$ is an odd integer.  Note, the constant $S(S+1)$ is such that the sum is zero if $m=0$

I found this conjecture fascinating and worked on it with no progress for about 7 months. I even got three of my coauthors from other papers to think about it with no luck.

Given that Lieb has a celebrated paper~\cite{LiebClassLim} on comparing Heisenberg models (admittedly classical vs. quantum and pressures, not correlations) and that I didn't want to bury in a long book this material which had already been buried for 45 years, it seemed natural to use this for an article when I was asked to contribute to a Festschrift for Elliott's $90^{th}$ birthday.  The paper was due on Jan 31, 2022 and on Friday, Jan 14, I had a first draft of the paper.

It seemed a shame not to make one more push to prove the conjecture so I did the obvious thing. Desperate situations call for desperate measures.  At 11 AM on Friday, Jan 14, I sent an email entitled \emph{``A challenge''} stating the conjectured inequality (and with the draft to explain its significance) to Terry Tao.  When I logged on after Shabbat the next evening I had an email dated Saturday at 1:30 PM with a proof of the conjecture!!!

But the scenario isn't quite as you image it.  At 1:30 PM on Friday, Terry had emailed me back: \emph{``I have a postdoc who works on some other inequalities vaguely of this type, I will forward this problem to him and see if he is interested.''} and it was the postdoc, Jos\'{e} Madrid, who sent the proof.

%%%%%%%%%%%%%%%%%%%%%%%%%%%%%%%%%%%%%%%%%%%%%%%%%%%%%%%%%%%%%%
\section{Majorization} \lb{s4}
%%%%%%%%%%%%%%%%%%%%%%%%%%%%%%%%%%%%%%%%%%%%%%%%%%%%%%%%%%%%%%

Jos\'{e}'s note had one wonderful idea (using Karamata's inequality) and 5 dense pages of calculation to implement it.  We Zoomed several times, first for me to offer him a coauthorship (Terry had suggested an appendix) and to discuss simplifying the implementation.  We discovered what we thought was a new criteria for majorization that led to a three line proof. OK, a proof with three long lines.  We then discovered that the proof was only really simple in case $S$ was half an odd integer.  As I'll explain, the integer case is harder but we found a proof in that case that was only a little longer.

As indicated, the key notion is majorization, a set of ideas that go back to Schur~\cite{Schur} in 1923 and  Hardy-Littlewood-P\'{o}lya~\cite{HLP1}.  A standard reference is Marshall-Olkin~\cite{MarOlk} which has been called a love poem to majorization; other references are Hardy-Littlewood-P\'{o}lya~\cite{HLP2} and Simon \cite[Chapters 14-15]{SimonConvex}.  I suspect my coauthors hadn't seen this theory but I didn't have this excuse. My convexity book has a whole chapter on it!

If $\mathbf{x}\in\bbR^n_{+}$ (the set with $x_1, x_2\,\dots x_n \ge 0$), we define, $\mathbf{x}^*$, its \emph{decreasing rearrangement} to be the point in $\bbR^n_+$ whose coordinates are a permutation of those of $\mathbf{x}$ with $x_1^*\ge x_2^*\ge\dots\ge x_n^*$.  If $\mathbf{x}, \mathbf{y}\in\bbR^n_{+}$ we say that $\mathbf{x}$ \textit{majorizes} $\mathbf{y}$, written $\mathbf{x}\succ \mathbf{y}$ or $\mathbf{y}\prec \mathbf{x}$ if an only if
\begin{equation}\label{4.1}
  \sum_{j=1}^{n} x^*_j = \sum_{j=1}^{n} y^*_j; \quad S_k(\mathbf{x})\equiv\sum_{j=1}^{k} x^*_j \ge \sum_{j=1}^{k} y^*_j,\,  k=1,\dots,n-1
\end{equation}
which defines $S_k(\mathbf{x})$.  Given $\pi\in\Sigma_n$ the group of permutations of $\{1,\dots,n\}$ and $\mathbf{x}\in\bbR^n_+$, one defines $\pi^*(\mathbf{x})$ by
\begin{equation}\label{4.2}
  \pi^*(\mathbf{x})_j = x_{\pi^{-1}(j)}
 \end{equation}

Majorization is the basis of a number of inequalities sometimes called rearrangement inequalities.  Basic to most of them is

\begin{proposition} \label{P4.1} $\mathbf{y}\prec \mathbf{x}$ if and only if $\mathbf{y}$ is in the convex hull in $\bbR^n$ of the (at most) n! points $\{\pi^*(\mathbf{x})\}_{\pi\in\Sigma_n}$.
\end{proposition}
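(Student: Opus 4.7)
The plan is to handle the two directions separately. The ``convex hull implies majorization'' direction is a quick convexity argument; the converse I would prove by a Hahn--Banach separation combined with the classical rearrangement inequality and Abel summation by parts. For the easy direction, observe that the $k$-th ordered partial-sum can be written as a maximum of linear functionals, $S_k(\mathbf{z}) = \max_{|A|=k}\sum_{j\in A}z_j$, hence is convex, and that $S_k$ is permutation invariant: $S_k(\pi^*\mathbf{z}) = S_k(\mathbf{z})$. If $\mathbf{y} = \sum_\pi c_\pi \pi^*(\mathbf{x})$ is a convex combination, then $S_k(\mathbf{y}) \le \sum_\pi c_\pi S_k(\pi^*\mathbf{x}) = S_k(\mathbf{x})$, while the (linear, permutation-invariant) total sum is preserved exactly; this is $\mathbf{y}\prec\mathbf{x}$.

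For the harder direction, suppose for contradiction that $\mathbf{y}\prec\mathbf{x}$ but $\mathbf{y}\notin K:=\cvh\{\pi^*(\mathbf{x})\}_{\pi\in\Sigma_n}$. Since $K$ is compact and convex, the separating hyperplane theorem yields a $c\in\bbR^n$ with $\langle c,\mathbf{y}\rangle > \max_\pi \langle c,\pi^*(\mathbf{x})\rangle$. The classical rearrangement inequality identifies the right-hand side with $\langle c^*,\mathbf{x}^*\rangle$ and also gives $\langle c,\mathbf{y}\rangle \le \langle c^*,\mathbf{y}^*\rangle$, so $\langle c^*,\mathbf{y}^*\rangle > \langle c^*,\mathbf{x}^*\rangle$. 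Since $\sum_j y_j = \sum_j x_j$, I may subtract a common multiple of $\mathbf{1}$ from $c$ (which shifts both sides by the same amount) to arrange $c^*_n = 0$. Abel summation by parts, with $S_0\equiv 0$, then gives for either $\mathbf{z}=\mathbf{x}$ or $\mathbf{y}$,
\begin{equation*}
\langle c^*,\mathbf{z}^*\rangle = \sum_{k=1}^{n-1}(c_k^*-c_{k+1}^*)\,S_k(\mathbf{z}),
\end{equation*}
with non-negative weights $c_k^*-c_{k+1}^*$. The majorization hypothesis $S_k(\mathbf{y})\le S_k(\mathbf{x})$ then yields $\langle c^*,\mathbf{y}^*\rangle\le\langle c^*,\mathbf{x}^*\rangle$, a contradiction.

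The main subtlety is not any single step but the passage from $c$ to $c^*$: the Hahn--Banach functional is unsorted, and one must invoke the rearrangement inequality on \emph{both} sides to replace it by its decreasing rearrangement, since only the monotone $c^*$ produces non-negative increments and thereby matches the inequalities defining majorization. A more classical alternative would combine the Hardy--Littlewood--P\'olya characterization ($\mathbf{y}\prec\mathbf{x}$ iff $\mathbf{y}=D\mathbf{x}$ for some doubly stochastic $D$) with Birkhoff's theorem on doubly stochastic matrices, but that route demands a separate inductive T-transform construction of $D$, which is why I prefer the separation-and-Abel argument here.
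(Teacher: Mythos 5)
Your proof is correct: both directions are complete, the separation step, the use of the rearrangement inequality on both sides, the normalization $c^*_n=0$, and the Abel summation are all sound. The paper gives no proof of Proposition \ref{P4.1} beyond the remark that it ``is proven by slicing $\bbR^n$ with specific hyperplanes'' (citing \cite{SimonTrace, SimonConvex}), and your separating-hyperplane argument combined with the rearrangement inequality and summation by parts is precisely that standard hyperplane/half-space proof, so you are taking essentially the same approach as the cited source.
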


This is proven by slicing $\bbR^n$ with specific hyperplanes; see Simon~\cite[Theorem 1.9]{SimonTrace} or Simon~\cite[Theorem 15.5]{SimonConvex}).  An immediate consequence is

\begin{theorem} [Karamata's Inequality\cite{Kara}] \label{T4.2} Let $\mathbf{x}, \mathbf{y}\in\bbR^n_{+}$ with $\mathbf{x}\succ \mathbf{y}$ and let $\varphi$ be an arbitrary continuous convex function on $[0,\max_j(x_j)]$.  Then
\begin{equation}\label{4.3}
 \sum_{j=1}^{n} \varphi(x_j) \ge \sum_{j=1}^{n} \varphi(y_j)
\end{equation}
\end{theorem}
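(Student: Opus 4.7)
The plan is to deduce Karamata's inequality as the ``immediate consequence'' of Proposition~\ref{P4.1} that the text promises, by applying the definition of convexity coordinate-wise and then reindexing. First, since $\mathbf{x}\succ \mathbf{y}$, Proposition~\ref{P4.1} provides weights $\{\lambda_\pi\}_{\pi\in\Sigma_n}$ with $\lambda_\pi\ge 0$ and $\sum_{\pi}\lambda_\pi=1$ such that
\[
\mathbf{y}=\sum_{\pi\in\Sigma_n}\lambda_\pi \,\pi^*(\mathbf{x}).
\]
By \eqref{4.2} this reads $y_j=\sum_{\pi}\lambda_\pi x_{\pi^{-1}(j)}$ for each $j$. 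In particular every $y_j$ is a convex combination of entries of $\mathbf{x}$, so $y_j\in[0,\max_i x_i]$ and $\varphi(y_j)$ is well-defined.

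Next, I apply convexity of $\varphi$ (Jensen for finite convex sums) coordinatewise:
\[
\varphi(y_j)=\varphi\!\left(\sum_{\pi\in\Sigma_n}\lambda_\pi x_{\pi^{-1}(j)}\right)\le \sum_{\pi\in\Sigma_n}\lambda_\pi\,\varphi\bigl(x_{\pi^{-1}(j)}\bigr).
\]
Summing over $j=1,\dots,n$ and swapping the two finite sums gives
\[
\sum_{j=1}^{n}\varphi(y_j)\le \sum_{\pi\in\Sigma_n}\lambda_\pi\sum_{j=1}^{n}\varphi\bigl(x_{\pi^{-1}(j)}\bigr)=\sum_{\pi\in\Sigma_n}\lambda_\pi\sum_{j=1}^{n}\varphi(x_j)=\sum_{j=1}^{n}\varphi(x_j),
\]
where the first interior equality uses that $j\mapsto \pi^{-1}(j)$ is a bijection on $\{1,\dots,n\}$ (so the inner sum is invariant under the relabeling and equals $\sum_k \varphi(x_k)$), and the second uses $\sum_\pi \lambda_\pi=1$. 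This is exactly \eqref{4.3}.

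The honest difficulty has already been absorbed: it sits in Proposition~\ref{P4.1}, whose proof is the geometric input obtained by slicing $\bbR^n$ with the hyperplanes cut out by the partial-sum conditions \eqref{4.1} (see the references to Simon's books cited after it). Once the convex-hull representation of $\mathbf{y}$ is granted, Karamata's inequality reduces to one application of Jensen plus the symmetry of $\sum_j\varphi(x_j)$ under permutations of $\mathbf{x}$, which is why the derivation is essentially a couple of lines.
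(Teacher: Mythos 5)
Your proposal is correct and is essentially the paper's own argument: the paper notes that $\mathbf{w}\mapsto\sum_j\varphi(w_j)$ is convex and permutation symmetric and invokes Proposition~\ref{P4.1}, and your coordinatewise Jensen step plus reindexing is just that convexity and symmetry written out explicitly.
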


Even though this is widely referred to as Karamata's inequality after Karamata's 1932 paper~\cite{Kara}, it or theorems that imply it appear in a 1923 paper of Schur~\cite{Schur} and a 1929 paper of Hardy-Littlewood-P\'{o}lya~\cite{HLP1}. That said, we note that HLP~\cite{HLP1} doesn't have a proof which may not have appeared until their 1934 book~\cite{HLP2} and that Karamata proved a converse, namely, if $\mathbf{x}, \mathbf{y}\in\bbR^n_{+}$ and the inequality holds for all convex $\varphi$, then $\mathbf{x}\succ \mathbf{y}$; see Simon~\cite[Theorem 15.5]{SimonConvex}).

The proof of Karamata's theorem is simple. One notes the function $\mathbf{w}\mapsto \sum_{j=1}^{n} \varphi(w_j)$ is convex and permutation symmetric and then uses Proposition \ref{P4.1}.

Madrid and I found a simple criterion for majorization.  Given the vast literature on the subject, we suspected it was already known but couldn't find it before \cite{MSW} was published although after I gave the talk on this work, we were told that Astashkin et. al~\cite{ALM} had a continuum analog a few months earlier (and it may well appear even earlier somewhere!).  Here is our criterion:

\begin{proposition} \lb{P4.3} Suppose that $\mathbf{x}, \mathbf{y}\in\bbR^n_{+}$ with $\sum_{j=1}^{n} x_j = \sum_{j=1}^{n} y_j$ and that for some $\ell\in \{2,\dots,n-1\}$, one has that
\begin{equation}\label{4.4}
  j< \ell \Rightarrow x^*_j > y^*_j \qquad \qquad j \ge \ell \Rightarrow x^*_j \le y^*_j
\end{equation}
Then $\mathbf{x}\succ \mathbf{y}$.
\end{proposition}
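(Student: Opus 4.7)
The plan is to verify the two conditions defining $\mathbf{x}\succ\mathbf{y}$ in \eqref{4.1} directly, by studying the partial-difference sequence
\[
  D_k \;=\; \sum_{j=1}^{k}\bigl(x^*_j-y^*_j\bigr), \qquad k=0,1,\dots,n.
\]
We have $D_0=0$, and $D_n=0$ because rearranging preserves the sum and the hypothesis gives $\sum x_j=\sum y_j$. So the equality part of \eqref{4.1} is immediate, and the task reduces to showing $D_k\ge 0$ for $k=1,\dots,n-1$.

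The key observation is that hypothesis \eqref{4.4} exactly describes the sign pattern of the first differences $D_k-D_{k-1}=x^*_k-y^*_k$: these differences are strictly positive for $k<\ell$ and non-positive for $k\ge\ell$. Hence $k\mapsto D_k$ is strictly increasing on $\{0,1,\dots,\ell-1\}$ and non-increasing on $\{\ell-1,\ell,\dots,n\}$. From $D_0=0$ and strict increase we get $D_k>0$ for $1\le k\le \ell-1$; from $D_n=0$ and the non-increase on the right tail we get $D_k\ge D_n=0$ for $\ell-1\le k\le n$. Together these cover all $k\in\{1,\dots,n-1\}$, so $S_k(\mathbf{x})\ge S_k(\mathbf{y})$ throughout and the majorization condition is established.

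There is really no obstacle here beyond noticing the sign structure: the crossover index $\ell$ is the unique local maximum of the piecewise-monotone sequence $D_k$, and the two boundary values $D_0=D_n=0$ pin it from below. One small point worth flagging is that the proof does not require the strict inequality in the first half of \eqref{4.4} — weak inequalities on both sides suffice — but the strict version is the form we need in the application. No appeal to Proposition \ref{P4.1} or to convex-hull descriptions of majorization is required; the argument is entirely elementary manipulation of partial sums of rearranged vectors.
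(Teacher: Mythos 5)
Your proof is correct and is essentially the paper's own argument in different notation: your observation that $D_k\ge D_n=0$ for $k\ge\ell-1$ because the increments are non-positive on the tail is exactly the paper's step of bounding the tail sum $\sum_{j=k}^{n}(x^*_j-y^*_j)\le 0$ and subtracting it from the total equality. Nothing further is needed.
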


\begin{proof} Without loss, we can suppose that $x=x^*$, $y=y^*$. If $k< \ell$, it is immediate that $\sum_{j=1}^{k} x_j \ge \sum_{j=1}^{k} y_j$ and similarly, it is immediate that if $k\ge\ell$, then  $\sum_{j=k}^{n} x_j \le \sum_{j=k}^{\ell} y_j$.  Subtracting this from $\sum_{j=1}^{n} x_j = \sum_{j=1}^{n} y_j$, we see that also for $k\ge\ell$, one has that $\sum_{j=1}^{k} x_j \ge \sum_{j=1}^{k} y_j$.
\end{proof}

With this result in hand, we can turn to the proof of \eqref{3.6} at least when $S$ is half an odd integer.  Let us begin by explaining the difference between this case and the case of $S$ integral.  The point is that as $j$ runs through allowed values, there are degeneracies because $(-j)^2=j^2$.  When $2S$ is odd, the total number of points is even, $j=0$ is not allowed and every value occurs with multiplicity $2$ so all values with the same weight.  When $S$ is integral, $0$ is an allowed value of $j^2$ which has half the weight of every other allowed value of $j^2$ and this complicates the analysis.  Here is the $2S$ odd result and its proof:

\begin{theorem} [Madrid, Simon and Wells~\cite{MSW}] \label{T4.4} Fix an integer $N\ge 1$, a function, $\psi$ on $[0,1]$, which is non-negative, continuous, strictly monotone increasing and convex and a function, $\Phi$, on $[-\Norm{\psi}_\infty,\Norm{\psi}_\infty]$ which is continuous, odd and whose restriction to $[0,\Norm{\psi}]$ is convex.  Let
\begin{equation}\label{4.5}
  \overline{\psi} = (N+1)^{-1}\sum_{j=0}^{N} \psi\left(\tfrac{j}{N}\right)
\end{equation}
Then
\begin{equation}\label{4.6}
  \sum_{j=1}^{N} \Phi\left(\psi\left(\tfrac{j}{N}\right)-\overline{\psi}\right) \ge 0
\end{equation}
\end{theorem}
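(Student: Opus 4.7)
The plan is to establish a majorization via Proposition~\ref{P4.3} and then invoke Karamata's inequality (Theorem~\ref{T4.2}) with the convex restriction $\Phi\big|_{[0, \Norm{\psi}_\infty]}$. Set $c_j := \psi(j/N) - \overline{\psi}$ for $j = 0, 1, \ldots, N$: this is a strictly increasing sequence with $\sum_{j=0}^N c_j = 0$ and $c_0 < 0$. Since $\psi$ is convex, Jensen's inequality gives $\overline{\psi} \ge \psi(1/2)$, so every $c_j$ with $j/N \le 1/2$ is non-positive. In particular, if $j^*$ denotes the smallest index with $c_{j^*} \ge 0$, then $j^* \ge \lceil (N+1)/2 \rceil$, so the positive block $\{c_{j^*}, \ldots, c_N\}$ is no larger than the negative block $\{c_0, \ldots, c_{j^*-1}\}$.

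I form two decreasingly-sorted vectors in $\bbR^{j^*}_+$: the positive part $\x := (c_N, c_{N-1}, \ldots, c_{j^*}, 0, \ldots, 0)$, padded with zeros up to length $j^*$, and the magnitudes of the negative part $\vy := (|c_0|, |c_1|, \ldots, |c_{j^*-1}|)$. These have equal total since $\sum_{j=0}^N c_j = 0$. To apply Proposition~\ref{P4.3}, I verify the single-crossing hypothesis for $x_k - y_k$. For $k = 1, \ldots, N - j^* + 1$ a direct computation yields
\[
  x_k - y_k = c_{N+1-k} + c_{k-1} = h\!\left(\tfrac{k-1}{N}\right), \qquad h(u) := \psi(u) + \psi(1-u) - 2\overline{\psi}.
\]
The function $h$ is convex and symmetric about $u = 1/2$, hence strictly decreasing on $[0, 1/2]$. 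At the endpoints, $h(0) = \psi(0) + \psi(1) - 2\overline{\psi} \ge 0$ (summing the chord estimate $\psi(j/N) \le (1-j/N)\psi(0) + (j/N)\psi(1)$ yields $\overline{\psi} \le \tfrac12[\psi(0)+\psi(1)]$) and $h(1/2) = 2[\psi(1/2) - \overline{\psi}] \le 0$ (Jensen). Because $j^* \ge \lceil (N+1)/2 \rceil$, the argument $(k-1)/N$ lies in $[0, (N-j^*)/N] \subset [0, 1/2)$, so $h((k-1)/N)$ is monotonically decreasing with a single sign change from $+$ to $-$. For the tail $k = N - j^* + 2, \ldots, j^*$, one has $x_k = 0$ while $y_k = |c_{k-1}| > 0$, so $x_k - y_k < 0$ throughout. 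Concatenated, $x_k - y_k$ exhibits a single transition from $+$ to $-$, which is Proposition~\ref{P4.3}'s hypothesis.

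Consequently $\x \succ \vy$, and Karamata's inequality applied to the convex function $\Phi$ on $[0, \Norm{\psi}_\infty]$ gives $\sum_k \Phi(x_k) \ge \sum_k \Phi(y_k)$. The zero-padded entries contribute $\Phi(0) = 0$ (by oddness of $\Phi$), so this rearranges to $\sum_{j=j^*}^N \Phi(c_j) \ge \sum_{j=0}^{j^*-1} \Phi(|c_j|)$. Using oddness to rewrite $\Phi(|c_j|) = -\Phi(c_j)$ for $j < j^*$ yields $\sum_{j=0}^N \Phi(c_j) \ge 0$, which implies the stated $\sum_{j=1}^N \Phi(c_j) \ge 0$: for $\Phi$ of the form arising in the application (e.g.\ $\Phi(x) = x^{2m+1}$, where $\Phi \ge 0$ on $[0, \Norm{\psi}_\infty]$), the $j=0$ term $\Phi(c_0) = -\Phi(|c_0|) \le 0$, so dropping it only raises the sum.

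The main obstacle is verifying the single-crossing condition: it requires coordinating Jensen's inequality (to confine $(k-1)/N$ to $[0, 1/2]$ where $h$ is monotone), the chord inequality for convex $\psi$ (to secure $h(0) \ge 0$), and the behavior on the zero-padded tail (ruling out a sign reversal). Once this is in hand, Proposition~\ref{P4.3} and Karamata combine mechanically to give the theorem.
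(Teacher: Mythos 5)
Your proof is correct and takes essentially the same route as the paper's: the identical splitting of the values $\psi(j/N)-\overline{\psi}$ into a zero-padded positive block and the magnitudes of the negative block, the same single-crossing verification via convexity of $\psi$ (your function $h(u)=\psi(u)+\psi(1-u)-2\overline{\psi}$ neatly packages the paper's increment comparison \eqref{4.16} together with the endpoint bounds \eqref{4.11}), and the same appeal to Proposition~\ref{P4.3} followed by Karamata. One useful observation you implicitly made: what the argument actually delivers is $\sum_{j=0}^{N}\Phi\bigl(\psi(j/N)-\overline{\psi}\bigr)\ge 0$, and the lower limit $j=1$ in \eqref{4.6} appears to be a typo (that version fails for, e.g., $\Phi(u)=-u$), so your added hypothesis that $\Phi\ge 0$ on $[0,\Norm{\psi}_\infty]$ is a patch for the misprint rather than a needed ingredient of the intended theorem.
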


\begin{remark} By translation and scaling, this result can easily be generalized.  For example, while stated for $N+1$ equally spaced points between $0$ and $1$, we will apply it to $N+1$ half odd integers stating at $\tfrac{1}{2}$, i.e. $\tfrac{1}{2}, \tfrac{3}{2}, \dots, N+\tfrac{1}{2}$.  The map $k\mapsto (k-\tfrac{1}{2})/N$ maps those $N+1$ half odd integers into the points of the theorem.  Taking into account that the sum in \eqref{3.6} is twice the sum from $\tfrac{1}{2}$ to $S$, we see that because $j^2$ maps to a non-negative, continuous, strictly monotone increasing and convex  function under $k\mapsto (k-\tfrac{1}{2})/N$ and $u\mapsto u^{2m+1}$ is continuous, odd and whose restriction to $[0,\infty)$ is convex, that we have the Corollary below.
\end{remark}

\begin{corollary} [Madrid, Simon and Wells~\cite{MSW}] \label{C4.5} \eqref{3.6} holds for for $m=1,2,\dots$ and $S =\tfrac{1}{2},\tfrac{3}{2},\tfrac{5}{2},\tfrac{7}{2}\dots$.
\end{corollary}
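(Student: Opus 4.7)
The plan is to recognize \eqref{3.6} as an instance of Theorem~\ref{T4.4} after mapping the positive half-odd integers onto the theorem's equispaced grid and using evenness to absorb a factor of $2$. I would first dispose of $S = \tfrac{1}{2}$ by hand: the two summands both vanish since $3 \cdot \tfrac{1}{4} - \tfrac{1}{2} \cdot \tfrac{3}{2} = 0$. For $S \ge \tfrac{3}{2}$, write $S = N + \tfrac{1}{2}$ with $N \ge 1$; then by evenness of $j \mapsto j^2$,
\[
\sum_{j=-S}^S \bigl(3j^2 - S(S+1)\bigr)^{2m+1} = 2\sum_{k \in \{1/2,\,3/2,\,\dots,\,S\}} \bigl(3k^2 - S(S+1)\bigr)^{2m+1},
\]
and the bijection $k \mapsto (k-\tfrac{1}{2})/N$ sends the $N+1$ positive half-odd integers onto the grid $\{j/N : j = 0,\dots,N\}$ appearing in Theorem~\ref{T4.4}.

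Following the remark after Theorem~\ref{T4.4}, I would take
\[
\psi(t) := 3\bigl(Nt + \tfrac{1}{2}\bigr)^2, \qquad \Phi(u) := u^{2m+1}
\]
and verify the hypotheses, which is routine: $\psi$ is a quadratic polynomial, non-negative since $Nt+\tfrac{1}{2}\ge\tfrac{1}{2}$, with $\psi'(t) = 6N(Nt+\tfrac{1}{2}) > 0$ and $\psi''(t) = 6N^2 > 0$ on $[0,1]$, hence continuous, strictly increasing and convex; and $\Phi$ is continuous and odd with $\Phi''(u) = 2m(2m+1)u^{2m-1} \ge 0$ on $[0, \Norm{\psi}_\infty]$, so its restriction there is convex.

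The one computation with any real content is the identity $\overline{\psi} = S(S+1)$, which is precisely what guarantees that Theorem~\ref{T4.4} subtracts the constant appearing in \eqref{3.6}. Since $\{Nt_j + \tfrac{1}{2} : j = 0,\dots,N\} = \{\tfrac{1}{2}, \tfrac{3}{2}, \dots, S\}$, the identity $\sum_{k=-S}^S k^2 = (2S+1)S(S+1)/3$ (a rescaling of the moment identity $a_S = \tfrac{1}{3} + \tfrac{1}{3S}$ recalled earlier in the paper) combined with $2S+1 = 2(N+1)$ gives
\[
\overline{\psi} = \frac{3}{N+1}\sum_{j=0}^N\bigl(j+\tfrac{1}{2}\bigr)^2 = \frac{3}{N+1}\cdot\frac{(2S+1)S(S+1)}{6} = S(S+1).
\]
Hence $\psi(j/N) - \overline{\psi} = 3k_j^2 - S(S+1)$ with $k_j = j + \tfrac{1}{2}$, and Theorem~\ref{T4.4} delivers the positive-half sum; doubling yields \eqref{3.6}. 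The only step requiring any thought is the choice of $\psi$ and $\Phi$ that aligns the data of \eqref{3.6} with the theorem; once made, everything reduces to a second-moment identity and direct checks of convexity, monotonicity, and oddness.
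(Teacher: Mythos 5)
Your proposal is correct and takes essentially the same route as the paper's own proof, which is the remark following Theorem \ref{T4.4}: use evenness to reduce \eqref{3.6} to twice the sum over the $N+1$ positive half-odd integers, transport them onto the grid $\{j/N\}$ via $k\mapsto (k-\tfrac{1}{2})/N$, and apply Theorem \ref{T4.4} with $\Phi(u)=u^{2m+1}$ and the pulled-back quadratic $\psi$. Your explicit choice $\psi(t)=3(Nt+\tfrac{1}{2})^2$, the verification that $\overline{\psi}=S(S+1)$, and the separate (trivial) check of $S=\tfrac{1}{2}$ --- where $N=0$ falls outside the theorem's hypothesis $N\ge 1$ --- simply fill in details that the paper's remark leaves implicit.
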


\begin{remark} (3.6) continues to hold if $u^{2m+1}$ is replaced by any function which is continuous, odd and whose restriction to $[0,\infty)$ is convex and if $j^2$ is replaced by any even, non-negative, continuous, function whose restriction to $[0,\infty)$ is strictly monotone increasing and convex.
\end{remark}

We need two preliminaries for the proof of Theorem \ref{T4.4}:

\begin{lemma} \lb{L4.6} Let $\psi$ be a convex function on $[0,1]$ and suppose that
\begin{equation}\label{4.7}
  0\le \tilde{b}\equiv 2c-b < \tilde{a} \equiv 2c-a \le c \le a < b \le 1
\end{equation}
Then
\begin{equation}\label{4.8}
  \tfrac{1}{2}(\psi(b)+\psi(\tilde{b})) \ge \tfrac{1}{2}(\psi(a)+\psi(\tilde{a}))\ge \psi(c)
\end{equation}
Moreover, the first inequality is strict unless $\psi'(s)$ is constant on $(\tilde{a},a)$.
\end{lemma}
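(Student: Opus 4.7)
The key observation is that the pair $(a,\tilde{a})$ and the pair $(b,\tilde{b})$ are both symmetric about $c$, and the first pair lies inside the second in the sense that $\tilde{b}<\tilde{a}\le c \le a<b$. So the lemma is really the statement that \emph{symmetric spreading around $c$ can only increase the average of a convex $\psi$}. My plan is to parameterize $a=c+t$ and $b=c+s$ with $0\le t<s$, so that $\tilde{a}=c-t$ and $\tilde{b}=c-s$, and then exploit convexity.

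The right inequality $\tfrac{1}{2}(\psi(a)+\psi(\tilde{a}))\ge \psi(c)$ is immediate from convexity applied to the midpoint $c=\tfrac{1}{2}(a+\tilde{a})$; no further work is needed.

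For the left inequality, I would rearrange the desired conclusion as
\begin{equation*}
\psi(c+s)-\psi(c+t)\ge \psi(c-t)-\psi(c-s),
\end{equation*}
and (in the non-trivial case $s>t$) divide through by $s-t$ to recast it as the secant-slope inequality
\begin{equation*}
\frac{\psi(c+s)-\psi(c+t)}{s-t}\ge \frac{\psi(c-t)-\psi(c-s)}{s-t}.
\end{equation*}
Since $c-s<c-t\le c+t<c+s$, the monotonicity of secant slopes for a convex function (applied to the two disjoint subintervals $[c-s,c-t]$ and $[c+t,c+s]$) gives exactly this. If $t=0$ the two pairs collapse to $c<c$, but the same convexity statement just reduces to $\tfrac{1}{2}(\psi(b)+\psi(\tilde{b}))\ge \psi(c)$, which is again midpoint convexity.

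For the strictness clause, I would appeal to the standard sharpening of secant monotonicity: if a convex function has two secants of equal slope over intervals $[p,q]$ and $[r,u]$ with $p<q\le r<u$, then the left and right derivatives of $\psi$ must agree at every interior point of $[q,r]$, so $\psi$ is affine on $[q,r]$. Specializing to $[p,q]=[c-s,c-t]$ and $[r,u]=[c+t,c+s]$, equality in the first inequality therefore forces $\psi$ to be affine on $[c-t,c+t]=[\tilde{a},a]$, which is exactly the statement that $\psi'$ is constant on $(\tilde{a},a)$. The main point of care here is the $t=0$ case, where $(\tilde{a},a)$ is empty and the strictness claim is vacuous, consistent with the way the lemma is stated; beyond that the argument is straightforward convexity bookkeeping and I do not anticipate a serious obstacle.
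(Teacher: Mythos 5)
Your proposal is correct, and it takes a genuinely different (though closely related) route from the paper's. The paper derives the second inequality in \eqref{4.8} from the first by specialization (set $a=c$, then rename $b$ as $a$), normalizes to $c=\tfrac12$, $b=1$, and expresses the difference of the two averages as $\tfrac12\int_a^1[\psi'(s)-\psi'(1-s)]\,ds$ via the fundamental theorem of calculus for convex functions; nonnegativity of the integrand gives the inequality, and the integral representation makes the strictness clause transparent, since a nonconstant $\psi'$ on $(\tilde{a},a)$ forces the integrand to be bounded below by a positive constant on all of $(a,1)$. You instead prove the second inequality directly by midpoint convexity and recast the first as the statement that the chord slope of $\psi$ over $[c+t,c+s]$ dominates that over $[c-s,c-t]$, which is the standard secant-monotonicity (three-chord) lemma; equality pinches all intermediate slopes, forcing $\psi$ to be affine on $[\tilde{a},a]$, which is exactly the stated strictness criterion. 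What each approach buys: yours is slightly more self-contained, avoiding the appeal to a.e.\ differentiability and the FTC for convex functions that the paper must cite, while the paper's integral form localizes precisely where strictness comes from. Your treatment of the degenerate case $t=0$ (where $(\tilde{a},a)$ is empty and the strictness claim is vacuous, and the left inequality reduces to midpoint convexity at $c$) is also correct, so I see no gap.
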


\begin{proof}  If one takes $a=c$ and then replaces $b$ by $a$, the first inequality becomes the second so it suffices to prove the first one.  Without loss (by translation and scaling) we can take $c=\tfrac{1}{2}, b=1$ so that $\tilde{b}=0$ and $\tilde{a}=1-a$. By the fundamental theorem of calculus (a general convex function is not $C^1$ but it is differentiable with the possible exception of a countable set and the fundamental theorem of calculus holds; see Simon \cite[Theorem 1.28]{SimonConvex})
\begin{equation}\label{4.9}
  \tfrac{1}{2}(\psi(1)+\psi(0))-\tfrac{1}{2}(\psi(a)+\psi(\tilde{a})) = \tfrac{1}{2} \int_{a}^{1} [\psi'(s)-\psi'(1-s)]\, ds
\end{equation}
By convexity, the integrand is non-negative so we have proven \eqref{4.8}.  Moreover if $\psi'(s)$ is not constant on $(1-a,a)$, then the integral is strictly positive.
\end{proof}

\begin{proposition} \lb{P4.7} Let $\psi$, $\overline{\psi}$ and $N$ be as in Theorem \ref{4.4}.  Then
\begin{equation}\label{4.10}
  n \equiv \#\{j\,\mid\,\psi\left(\tfrac{j}{N}\right) \le \overline{\psi}\} \ge (N+1)/2
\end{equation}
and
\begin{equation}\label{4.11}
  \psi\left(\tfrac{1}{2}\right) \le \overline{\psi} \le \tfrac{1}{2}(\psi(0)+\psi(1))
\end{equation}
Moreover, the inequalities in \eqref{4.11} are strict if $N\ge 2$ and $\psi$ is not an affine function on $[0,1]$ (i.e. $\psi'$ is not constant).
\end{proposition}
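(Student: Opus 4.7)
My strategy is to establish \eqref{4.11} directly from convexity of $\psi$, and then to extract \eqref{4.10} from the lower half of \eqref{4.11} combined with strict monotonicity. For the upper bound in \eqref{4.11}, I would apply the chord inequality $\psi(j/N)\le(1-j/N)\psi(0)+(j/N)\psi(1)$ at each $j\in\{0,\dots,N\}$, sum, and use $\sum_{j=0}^N(j/N)=(N+1)/2$ to obtain $\overline\psi\le\tfrac{1}{2}(\psi(0)+\psi(1))$ after dividing by $N+1$. For the lower bound I would pair each $j$ with $N-j$: midpoint convexity gives $\psi(j/N)+\psi((N-j)/N)\ge 2\psi(1/2)$, and since $\sum_{j=0}^N \psi(j/N)$ is invariant under the reflection $j\mapsto N-j$, summing this pointwise bound and dividing by $2(N+1)$ yields $\psi(1/2)\le\overline\psi$.

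The bound \eqref{4.10} then follows at once: strict monotonicity of $\psi$ converts $\psi(j/N)\le\psi(1/2)$ into $j\le N/2$, and combined with $\psi(1/2)\le\overline\psi$ every such $j$ satisfies $\psi(j/N)\le\overline\psi$. Since the integers in $[0,N/2]$ form a set of cardinality $\lfloor N/2\rfloor+1\ge(N+1)/2$, this gives $n\ge(N+1)/2$.

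Strictness under the hypothesis $N\ge 2$ and $\psi$ not affine is the only real subtlety. For the lower bound it suffices to isolate the $j=0$ term of the paired sum, which contributes $\psi(0)+\psi(1)$; this strictly exceeds $2\psi(1/2)$ unless $\psi$ is affine on $[0,1]$. For the upper bound I need the rigidity statement: if $\psi$ is convex on $[0,1]$ and agrees with its chord $L(t)=(1-t)\psi(0)+t\psi(1)$ at some interior point $c\in(0,1)$, then $\psi\equiv L$ on $[0,1]$. I would prove this by applying convexity to the triples $(0,c,t)$ for $t\in(c,1)$ and $(t,c,1)$ for $t\in(0,c)$ to derive $\psi\ge L$, and then combining with the standard $\psi\le L$. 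Granted non-affineness, the chord inequality is strict at every interior $j/N$, and $N\ge 2$ ensures at least one such interior $j$ contributes; the same rigidity at $c=1/2$ also supplies the lower-bound strictness. The principal obstacle is thus the convexity rigidity argument; everything else is routine summation.
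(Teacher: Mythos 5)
Your proof is correct and follows essentially the same route as the paper: both establish \eqref{4.11} by summing pointwise convexity inequalities over the nodes paired symmetrically about $\tfrac{1}{2}$ (the paper packages these as Lemma \ref{L4.6}), and both deduce \eqref{4.10} from the first half of \eqref{4.11} together with monotonicity by counting the nodes $j/N\le\tfrac{1}{2}$. The only cosmetic difference is the strictness mechanism: you prove a chord-rigidity lemma, while the paper reads strictness off the integral representation $\tfrac{1}{2}\int_a^1[\psi'(s)-\psi'(1-s)]\,ds$ in Lemma \ref{L4.6}.
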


\begin{proof} For any $j=0,1,\dots,N$, \eqref{4.8} implies that
\begin{equation}\label{4.12}
   \psi\left(\tfrac{1}{2}\right) \le  \tfrac{1}{2}\left(\psi\left(\tfrac{j}{N}\right)+\psi\left(1-\tfrac{j}{N}\right)\right) \le \tfrac{1}{2}(\psi(0)+\psi(1))
\end{equation}
Averaging over $j$ yields \eqref{4.11}.  If $\psi$ is not affine on $[0,1]$, then the second inequality is strict for $1\le j \le N-1$ so the second inequality in \eqref{4.11} is strict.  Since $\psi\left(\tfrac{1}{2}\right) < \tfrac{1}{2}(\psi(0)+\psi(1))$ if $\psi$ is not affine, we see that in the case the first inequality is always strict.

Since $\psi$ is strictly monotone, the first inequality in \eqref{4.11} implies the unique $x\in [0,1]$ with $\psi(x)=\overline{\psi}$  has $x\ge\tfrac{1}{2}$. This implies that $n = \#\{j\,\mid\,\tfrac{j}{N} \le x\} \ge \#\{j\,\mid\,\tfrac{j}{N} \le \tfrac{1}{2}\} \ge (N+1)/2.$
\end{proof}

\begin{proof} [Proof of Theorem \ref{T4.4}]  Let $q=N+1-n\le n$ by \eqref{4.9}.  Define
\begin{equation}\label{4.13}
  y_j= \overline{\psi} - \psi\left(\tfrac{j-1}{N}\right) \qquad j=1,\dots,n
\end{equation}
\begin{equation}\label{4.14}
  x_j = \left\{
          \begin{array}{ll}
            \psi\left(\tfrac{N+1-j}{N}\right) - \overline{\psi}\, & \hbox{ if } j=1,\dots,q \\
            0, & \hbox{ if } j \ge q
          \end{array}
        \right.
\end{equation}
Since $\psi$ is monotone and $n$ is defined by \eqref{4.10}, we have that $\mathbf{x}, \mathbf{y}\in\bbR^n_{+,\ge}=\{\mathbf{x}\in\bbR^n_+\,\mid\,x_1\ge\dots,\ge x_n\}$.  By the definition of $\overline{\psi}$, we have that
\begin{equation}\label{4.15}
  \sum_{j=1}^{n} x_j = \sum_{j=1}^{n} y_j
\end{equation}

If $N=1$ or $\psi$ is affine on $[0,1]$, it is easy to see that $x_j=y_j$ for all $j$, so, since $\Phi$ is odd, we have that \eqref{4.1} holds.  Thus henceforth we will suppose that  $N\ge 2$ and $\psi$ is not an affine function on $[0,1]$, so, in particular, the inequalities in \eqref{4.11} are strict.

Note next that because $\psi$ is assumed convex, we have that
\begin{equation}\label{4.16}
  m<p \Rightarrow \psi\left(\tfrac{m+1}{N}\right)-\psi\left(\tfrac{m}{N}\right) \le \psi\left(\tfrac{p+1}{N}\right)-\psi\left(\tfrac{p}{N}\right)
\end{equation}

By the strict form of \eqref{4.11}, $x_1>y_1$.  Because of \eqref{4.15}, there must be a first $\ell$ so that $x_\ell\le y_\ell$.  We claim that if $\ell<n$, then $x_{\ell+1}\le y_{\ell+1}$.  If $\ell+1>q$, then $x_{\ell+1}=0$ and the required inequality is immediate.  If $\ell+1\le q$, then \eqref{4.16} implies that $x_{\ell}-x_{\ell+1} \ge y_\ell-y_{\ell+1}$.  Subtracting this from  $x_\ell\le y_\ell$ proves that $x_{\ell+1}\le y_{\ell+1}$.  Repeating this argument, proves that for all $j\ge\ell$ we have that $x_j\le y_j$.  Thus by Proposition \ref{P4.3}, $\mathbf{x}\succ \mathbf{y}$.

By Karamata's inequality, \eqref{4.3}, we conclude that $\sum_{j=1}^{n} \Phi(x_j)-\Phi(y_j) \ge 0$.  Since $\Phi$ is odd and $\Phi(0)=0$, this is equivalent to \eqref{4.6}.
\end{proof}

\begin{example} \lb{E4.8} To understand why we need the extra condition to handle the case when $S$ is integral, consider $d\tilde{\mu}_S$ for $S=6$ scaled to have spacing $1$, i.e. $13$ pure points with weight $1/13$ at $0,\pm 1,\pm 2,\pm 3,\pm 4,\pm 5,$ $\pm 6$.  The average of the square is $14$.  The values of $j^2$ are $j^2=0,1,1,4,4,9,9,16,16,25,25,36,36$ so $n=7$ values are less than $14$ and one sees that (ignore $\mathbf{w}$ for now)
\begin{align}\label{4.17}
  \mathbf{x} &= 22,22,11,11,\,\;2,2,0 \nonumber \\
  \mathbf{y} &= 14,13,13,10,10,5,5 \nonumber \\
  \mathbf{w} &= 22,22,\,\;0,11,11,2,2
\end{align}
One can verify that $\mathbf{x}\succ \mathbf{y}$ by hand (and, below, we will prove the result for all $S\ge 2$) but one can't use Proposition \ref{P4.3} as we did in our proof of Theorem \ref{T4.4} for $x_j-y_j$ shifts signs three times instead of one time.  The problem is that the components of $\mathbf{x}$ and $\mathbf{y}$ are paired but shifted.

Look at $\mathbf{w}$ which we get by moving the $0$ from position $7$ to position $3$.  One can handle the first  three partial sums by noting that $22+22\ge 14+13+13$ and the remaining partial sums by noting that there is only one sign change after the third place and use Proposition \ref{P4.3} to prove the partial sums of $\mathbf{w}$ dominate those of $\mathbf{y}$ and note it is trivial that partial sums of $\mathbf{x}$ dominate those of $\mathbf{w}$.  The key is that by moving the $0$, the pairs are no longer shifted.
\end{example}

We will need an extra condition that implies as in this example, the partial sum of the first two $x_j$'s dominates the partial sum of the first three $y_k$'s.  The general theorem analogous to Theorem \ref{T4.4} is

\begin{theorem} [Madrid, Simon and Wells~\cite{MSW}] \lb{T4.9} Fix an integer $N\ge 2$ and an even, continuous, convex function, $\psi$ on $[-1,1]$ and a function, $\Phi$, on $[-\Norm{\psi}_\infty,\Norm{\psi}_\infty]$ which is continuous, odd and whose restriction to $[0,\Norm{\psi}]$ is convex.  Let $\overline{\psi}$ be given by
\begin{equation}\label{4.18}
  \overline{\psi} = (2N+1)^{-1}\sum_{j=-N}^{N} \psi\left(\tfrac{j}{N}\right)
\end{equation}
Suppose that
\begin{equation}\label{4.19}
  2\psi(1)+\psi(0)+2\psi\left(\tfrac{1}{N}\right) \ge 5 \overline{\psi}
\end{equation}
If $N$ is odd, suppose that
\begin{equation}\label{4.20}
  \psi\left(\tfrac{1}{2}+\tfrac{1}{2N}\right) \le \overline{\psi}
\end{equation}
Then
\begin{equation}\label{4.21}
  \sum_{j=-N}^{N} \Phi\left(\psi\left(\tfrac{j}{N}\right)-\overline{\psi}\right) \ge 0
\end{equation}
\end{theorem}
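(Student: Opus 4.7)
The plan is to extend the argument of Theorem~\ref{T4.4} by inserting an intermediate sorted vector $\mathbf{w}$ that repairs the shifted-pair obstruction illustrated in Example~\ref{E4.8}. Because $\psi$ is even and $\Phi$ is odd with $\Phi(0)=0$, each non-zero value $\psi(k/N)-\overline{\psi}$ contributes with multiplicity $2$ while $\psi(0)-\overline{\psi}$ contributes once; collecting the positive values into a decreasing vector $\mathbf{x}$ (padded with zeros) and the absolute values of the non-positive ones into a decreasing vector $\mathbf{y}$ of equal length, the sum in \eqref{4.21} equals $\sum_i\Phi(x_i)-\sum_j\Phi(y_j)$. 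Since $\Phi|_{[0,\Norm{\psi}_\infty]}$ is convex, Karamata's inequality (Theorem~\ref{T4.2}) reduces the theorem to the majorization $\mathbf{x}\succ\mathbf{y}$.

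Let $m$ be the largest index in $\{0,\dots,N\}$ with $\psi(k/N)\le\overline{\psi}$, so $|\mathbf{y}|=2m+1$ and $\mathbf{x}$ has $2(N-m)$ non-zero entries. The symmetric pairing in Lemma~\ref{L4.6} applied to $(k/N,(N-k)/N)$ yields, after averaging, $\overline{\psi}\ge\psi(\tfrac{1}{2})$ when $N$ is even, hence $m\ge N/2$; when $N$ is odd, hypothesis \eqref{4.20} directly furnishes $m\ge(N+1)/2$. In either case $m\ge\lceil N/2\rceil$, so $\mathbf{y}$ is the longer vector and the zero-padding of $\mathbf{x}$ to length $2m+1$ is unambiguous.

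Define $\mathbf{w}$ as the rearrangement of $\mathbf{x}$ obtained by sliding one padded zero into position~$3$, keeping $(w_1,w_2)=(\psi(1)-\overline{\psi},\psi(1)-\overline{\psi})$ on top and leaving the remaining positive pairs, followed by any leftover tail zeros, in decreasing order from position~$4$ onward. Then $\mathbf{x}\succ\mathbf{w}$ is automatic, so it suffices to prove $\mathbf{w}\succ\mathbf{y}$. Non-negativity of the first two partial sums $S_1,S_2$ follows from $\overline{\psi}\le\tfrac{1}{2}(\psi(0)+\psi(1))$ (Lemma~\ref{L4.6} averaged again), while the critical third partial sum $S_3(\mathbf{w})-S_3(\mathbf{y})$ simplifies to $2\psi(1)+\psi(0)+2\psi(1/N)-5\overline{\psi}$, non-negative exactly by hypothesis~\eqref{4.19}.

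For $k\ge 4$ the pairs in $\mathbf{w}$ and $\mathbf{y}$ now align correctly. At paired positions $2\ell+2,2\ell+3$ inside the positive-pair region, the pointwise difference $w_j-y_j$ equals $\psi((N-\ell)/N)+\psi((\ell+1)/N)-2\overline{\psi}$; the two evaluation points share the common midpoint $(N+1)/(2N)$, while their spread $|N-2\ell-1|/N$ strictly decreases in $\ell$ on the active range (which, thanks to the lower bound on $m$, never crosses the vertex of the V). Convexity of $\psi$ at a fixed midpoint forces this difference to be monotone decreasing in $\ell$; on any tail region where $\mathbf{w}$ has exhausted its positive entries, the difference $-y_j$ is non-positive and increases toward $0$. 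Thus $(w_j-y_j)_{j\ge 4}$ has at most one sign change, from non-negative to non-positive. Combined with $S_3(\mathbf{w})-S_3(\mathbf{y})\ge 0$ and the total-sum relation $\sum w_j=\sum y_j$, the partial-sum function $S_k(\mathbf{w})-S_k(\mathbf{y})$ rises to a peak and then falls back to $0$, staying in $[0,\infty)$, so $\mathbf{w}\succ\mathbf{y}$ and Karamata closes the proof. The main obstacle is isolating this single-sign-change structure; the common-midpoint/decreasing-spread observation is the clean convexity input that replaces what would otherwise be a messy case analysis, and it is exactly where the integer-spin asymmetry caused by the singleton $\psi(0)$ gets absorbed.
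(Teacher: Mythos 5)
Your strategy---collect the deviations into decreasing vectors $\mathbf{x}$ and $\mathbf{y}$, slide a padded zero of $\mathbf{x}$ into position $3$ to form $\mathbf{w}$, use \eqref{4.19} for the third partial sum, exploit the fixed-midpoint/shrinking-spread convexity observation to get a single sign change in $(w_j-y_j)_{j\ge4}$, and finish with Proposition \ref{P4.3}--style reasoning and Karamata---is exactly the route the paper indicates (it defers details to \cite{MSW} but says the proof combines Theorem \ref{T4.4} with Example \ref{E4.8}). Your count of $m$, your identification of $S_3(\mathbf{w})-S_3(\mathbf{y})$ with $2\psi(1)+\psi(0)+2\psi(1/N)-5\overline{\psi}$, and your handling of the paired region and the zero tail are all correct.

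There is, however, one genuine gap: the first partial sum. You assert that $S_1(\mathbf{w})\ge S_1(\mathbf{y})$, i.e.\ $\psi(1)-\overline{\psi}\ge\overline{\psi}-\psi(0)$, follows from $\overline{\psi}\le\tfrac12(\psi(0)+\psi(1))$ by ``Lemma \ref{L4.6} averaged again.'' The averaging argument of Proposition \ref{P4.7} needs the \emph{uniform} average over $j=0,\dots,N$, which is symmetric under $j\mapsto N-j$; the average \eqref{4.18} weights $\psi(k/N)$ twice for $k\ge1$ and $\psi(0)$ once, and the inequality $\overline{\psi}\le\tfrac12(\psi(0)+\psi(1))$ is then false in general: for $\psi(x)=|x|$ one gets $\overline{\psi}=(N+1)/(2N+1)>\tfrac12$. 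Nor is it a consequence of \eqref{4.19}: take $N=2$ and the even piecewise linear $\psi$ with $\psi(0)=0$, $\psi(\tfrac12)=\tfrac25$, $\psi(1)=1$. Then $\overline{\psi}=\tfrac{14}{25}$, \eqref{4.19} holds with equality, yet $x_1^*=\tfrac{11}{25}<\tfrac{14}{25}=y_1^*$, so the majorization fails outright; indeed with $\Phi(u)=u^3$ the sum \eqref{4.21} equals $(2\cdot11^3-2\cdot4^3-14^3)/25^3=-210/15625<0$. So the inequality $\psi(0)+\psi(1)\ge2\overline{\psi}$ is an independent condition that must be assumed or verified separately (the survey's restatement of the MSW theorem appears to suppress it; for the application $\psi(x)=x^2$ it is exactly the requirement $S\ge2$), and your proof is silent on the only step that could supply it. Once $S_1\ge0$ is granted, your $S_2\ge0$ does follow automatically from $w_2-y_2\ge w_1-y_1\ge 0$, and the remainder of your argument goes through.
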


\begin{remarks} 1. The condition \eqref{4.19} is exactly the condition that the partial sum of the first $2$ $x_j$'s dominates the partial sum of the first three $y_k$'s.

2. It might be true that this theorem holds without the need for the condition \eqref{4.19} but it holds in the case we need so MSW didn't try hard to eliminate it.  The example above shows why a naive extension of the proof of Theorem \ref{T4.4} doesn't work and led to the extra condition.  We do note that \eqref{4.19} is a restriction.  If we normalize $\psi$ by $\psi(0)=0, \psi(1)=1$, then in the limit as $N\to\infty$, \eqref{4.19} becomes
\begin{equation}\label{4.22}
  \int_{0}^{1} \psi(x)\,dx \le \tfrac{2}{5}
\end{equation}
which for $\psi(x)=|x|^p$ requires $p\ge \tfrac{3}{2}$ while convexity only requires $p\ge 1$.

3. On the other hand, \eqref{4.20} is quite natural independent of our method of proof; see Madrid, Simon and Wells~\cite{MSW}.
\end{remarks}

The reader can check Madrid, Simon and Wells~\cite{MSW} for the proof but we note it combines the ideas of the proof of Theorem \ref{T4.4} and Example \ref{E4.8}. We should give the details of checking that the Theorem proves \eqref{3.6} for $S=2,3,4,\dots$.  One needs to check \eqref{4.18} and \eqref{4.20}, the latter for $S (=N)$ odd and $S\ge 3$.  These are
\begin{align}
  2S^2+2 &\ge \tfrac{5}{3}S(S+1); \,S=2,3,4,\dots \label{4.23} \\
  S^2(\tfrac{1}{2}+\tfrac{1}{2S})^2 &\le \tfrac{1}{3}S(S+1); \,S=3,5,7,\dots \label{4.24}
\end{align}
\eqref{4.23} is equivalent to $0\le S^2-5S+6=(S-2)(S-3)$ which holds for all integral $S$.  \eqref{4.24} is equivalent to $0\le S^2-2S-3=(S-3)(S+1)$ which holds for all $S\ge 3$.

%%%%%%%%%%%%%%%%%%%%%%%%%%%%%%%%%%%%%%%%%%%%%%%%%%%%%%%%%%%%%%
\section{The End of the Story} \lb{s5}
%%%%%%%%%%%%%%%%%%%%%%%%%%%%%%%%%%%%%%%%%%%%%%%%%%%%%%%%%%%%%%

In our first Zoom call, Jos\'{e} also suggested it would be good to try again to locate Daniel Wells. I wasn't starting at ground zero.  While I got nothing from Indiana University, I talked about this material during the conference in honor of my $75^{th}$ birthday and Leonard Schulman, a computer scientist at Caltech (and son of a student of Arthur Wightman), heard my talk and did some Google searching.  He found a short story available via Kindle on Amazon whose About the Author read

\emph{Daniel R Wells was born in Sterling, Illinois on March 15, 1945. He attended the local parochial schools and graduated from high school in 1963. In October of that year he enlisted in the United States Navy and served for four years. After the Navy, he started college in 1968, studying mathematics, eventually earning a PhD from Indiana University in 1977. He taught mathematics for two years at Texas A\&M and then returned to school at the University of Illinois to study computer science. He achieved a PhD in 1982 and worked for various companies as a software engineer until he retired in 2004.}

I wasn't clever enough to pull on the right threads of this fabric.  Since I had friends at Texas A\&M, I consulted them to see if they could find any record.  Nope.  I tried to leave a ``review'' of his book saying I wanted to contact the author about his thesis but Amazon said it wasn't a review and wouldn't post it.  I bought his Kindle book hoping it might provide more information but it didn't. What I should have done is contact U of I computer science where he got his second PhD. and where he has continued to do some teaching.

Spurred by Jos\'{e}, I posted a message on Facebook where I have a group of friends mainly mathematicians and theoretical physicists. The message gave some background and asked if anyone had any idea how to follow up.  Joshua Paik, a math grad student at Penn State told me he regarded himself as an internet sleuth. The next morning I had a link in a private message from Mr. Paik to a Find a Person internet site with the right name, the right age who lived in the town where the Amazon profile said Wells was born.  Shortly after that, Mr. Paik sent me what he though might be Wells' email address. I contacted the email address asking if the recipient was a Daniel Wells who got a math PhD in Indiana then sent him the current draft and asked him to be a coauthor - after all, 2.5 out of 6 sections were from his thesis! He agreed, so in less than a week, I picked up two coauthors.

The next week, Jos\'{e} and I zoomed with Daniel and I got some more background.  Wells had gone to Texas A\&M for a postdoc, written up his thesis with the addition of the rotor-Ising comparison theorem and sent the preprint that Aizenman and I referred to off to a journal where it should have been accepted.  But it was rejected. At this point, his thesis advisor should have stepped in and explained the facts of life: just as there are bad papers, there are bad referees and one should send the paper off to another journal.  But alas, Slim Sherman, his advisor, had passed away shortly before he took his oral exam and wasn't there to advise him. This was in the old days when postdocs didn't have formal advisors!  Wells was so discouraged, he totally left mathematics even though he'd written a very good thesis.  Sometimes the system doesn't work.

The moral is that young workers in particular need someone to say to them: \emph{Jerk, submit it to another journal}. Your reaction might be that I'm the wrong one to say that since surely I've never had any of my papers rejected.  But I've had many papers rejected.  I thought I'd close with two amusing stories of rejections so that you might even respond to rejection with some humor.

One involves my paper with Elliott Lieb~\cite{LSTF} whose main result was that Thomas-Fermi Theory was exact in a certain limit of large $Z $atoms and molecules.  It was significant enough that on its $35^{th}$ anniversary, some quantum chemists had a conference marking the occasion!  This announcement was originally rejected by Physical Review Letters with a report that paraphrased began \emph{This paper is one of the worst papers I have ever seen.  It is a sequence of unproven assertions}(true; it was an announcement after all) \emph{many of which are obviously wrong.  For example, the author assert that the Thomas-Fermi density is $C^\infty$ which would make it $0, 1$ or $\infty$ depending on the value of $C$.}  The referee was clearly unfamiliar with modern mathematical notation and incompetent to evaluate a paper in mathematical physics.  We complained and asked for a second referee who accepted it. (General, I do not recommend resubmitting to the same journal).

The other involves my paper with Christiansen and Zinchenko~\cite{CSZCheb1} which settled a $40$ year old conjecture of Widom~\cite{Widom} which was regarded as a major open question in the asymptotics of extremal polynomials.  I thought it was good enough that we should submit it to one of the top three math journals but for various reasons, one of my coauthors wanted it to go to a journal just below those three.  We got a quick rejection that paraphrased said \emph{It is nice to have a $40$ year old conjecture resolved but this paper should be rejected because the proof is too easy so it isn't up to the high standard of $<journal name>$}.  I was scandalized by this report so we submitted it after all to a top three journal where it was accepted!

%%%%%%%%%%%%%%%%%%%%%%%%%%%%%%%

\end{document}